
\documentclass[pra,showpacs,twocolumn,superscriptaddress,nofootinbib]{revtex4-1}

\usepackage{lipsum}

\usepackage{mathtools}
\usepackage{float}
\usepackage{graphicx}
\usepackage{epsfig}
\usepackage{subfig}

\usepackage{dcolumn}
\usepackage{amsmath,amssymb,amsthm}
\usepackage{bm}
\usepackage{verbatim}
\usepackage{caption}
\captionsetup{justification=raggedright,singlelinecheck=false}
	
\usepackage{color}

\newtheorem{theorem}{Theorem}
\newtheorem{definition}{Definition}

\newcommand{\bra}[1]{\langle #1|}
\newcommand{\ket}[1]{|#1\rangle}

\newcommand{\ketbra}[2]{| #1 \rangle \langle #2 |}

\newcommand{\M}[1]{\mathcal{#1}}

\newcommand{\ad}[1]{a^{\dagger}_{#1}}

\newcommand{\vacket}{\ket{0}}
\newcommand{\vacbra}{\bra{0}}

\newcommand{\N}[1]{\Vert #1 \Vert}
\newcommand{\id}[1]{\mathbb{I}}
\newcommand{\Tr}[1]{\text{Tr}}

\usepackage{ulem}

\begin{document}
\title{ Completely Positive Maps for  Reduced States of Indistinguishable Particles } 
\author{Leonardo da Silva Souza}
\email{leonardosilvsouza@gmail.com}
\affiliation{Departamento de F\'{\i}sica - ICEx - Universidade Federal de Minas Gerais,
Av. Pres. Ant\^onio Carlos 6627 - Belo Horizonte - MG - Brazil - 31270-901.}
\author{Tiago Debarba} \affiliation{Universidade Tecnol\'ogica Federal do Paran\'a (UTFPR), Campus Corn\'elio Proc\'opio, 
Avenida Alberto Carazzai 1640, Corn\'elio Proc\'opio, Paran\'a  86300-000, Brazil}
\author{Diego L. Braga Ferreira}\affiliation{Departamento de F\'{\i}sica - ICEx - Universidade Federal de Minas Gerais,
Av. Pres. Ant\^onio Carlos 6627 - Belo Horizonte - MG - Brazil - 31270-901.}
\author{Fernando Iemini}\affiliation{Abdus Salam ICTP, Strada Costiera 11, I-34151 Trieste, Italy}
\author{Reinaldo O. Vianna}\affiliation{Departamento de F\'{\i}sica - ICEx - Universidade Federal de Minas Gerais,
Av. Pres. Ant\^onio Carlos 6627 - Belo Horizonte - MG - Brazil - 31270-901.}

\date{\today}
\begin{abstract}
We introduce  a framework for the construction of  completely positive maps for subsystems of  indistinguishable fermionic particles. 
In this scenario, the initial global state is always correlated, and  it is not possible to tell system and environment apart. 
Nonetheless, a reduced map in the operator sum representation is possible for some sets of states where the only non-classical correlation present is exchange.
\end{abstract}
\pacs{03.67.Mn, 03.65.Aa}
\maketitle

\section{Introduction}

\begin{figure*}[t]
\centering
\includegraphics[scale=0.3]{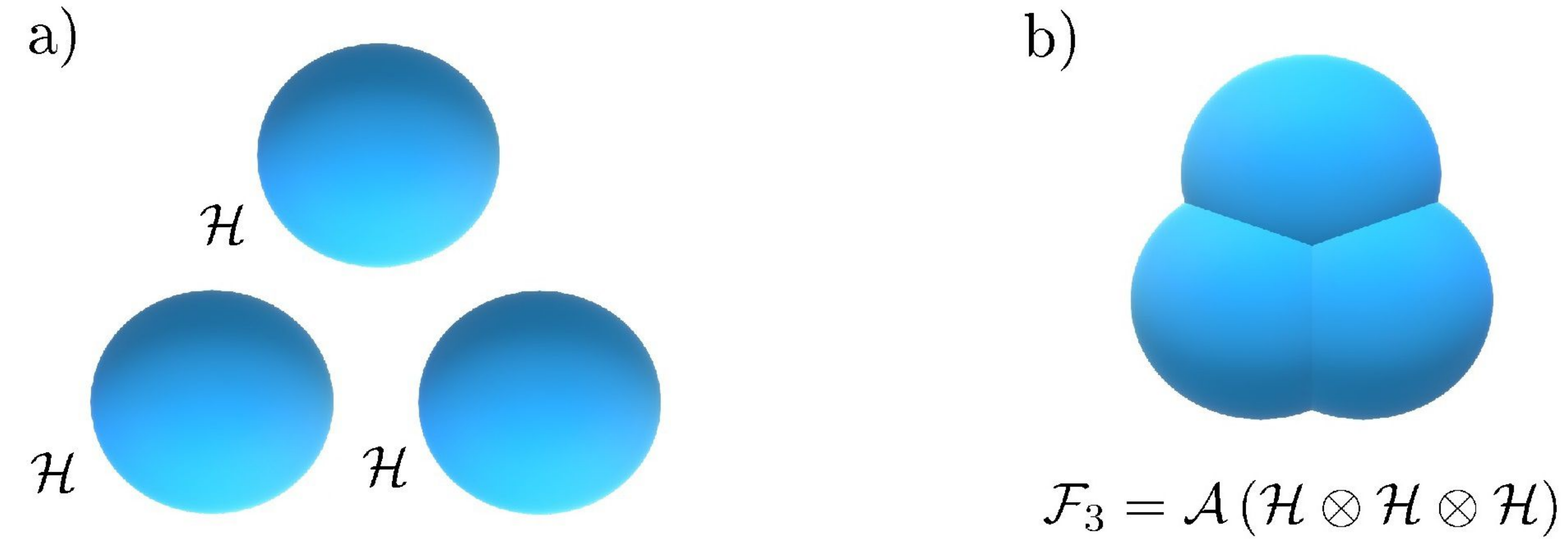}
\caption{Pictorial view of Hilbert space  with (a) tensor product structure ($\M{H}\otimes\M{H}\otimes\M{H}$),  and (b) antisymmetric space without tensor product structure, where the 
particle states overlap.  
A partial trace over a subsystem in the antisymmetric space has information about the whole system, since the particles are indistinguishable. }
\end{figure*}
The characterization of the dynamics of a system that may be correlated with other systems has been subject of investigation in several areas,
 varying from quantum information processing to condensed matter physics \cite{Breuer,Nielsen}. 
A closed system evolves unitarily according to the Schr\"odinger equation. On the other hand, the dynamics of a subsystem is not 
necessarily unitary, and the theory of open quantum systems provides the mathematical framework to treat it. In this context, we
speak of system and environment, and say that the system, which is just a part of the whole, is open. 
If system and environment start in a uncorrelated global state (factorable), then the dynamics is guaranteed to be completely positive (CP).
However, if the system is initially correlated with the environment, the map associated with the dynamics of the system may not be
 completely positive or, as we will see, is valid only for a subset of the state space. 
 In recent years, more attention has been given to the construction of reduced dynamical maps with different initial conditions \cite{SB,HKO, RMKSS,SL,BDMRR,VA}, mainly motivated by discussions between Pechukas and Alicki \cite{PP,RA, PP2}.
  Pechukas introduced  the idea of  `assignment map' ($\Phi$), which characterizes initial system-environment states ($\Phi\rho_S=\rho_{SE}$) for open quantum systems, and showed that imposing  three `natural'  conditions, namely: (linearity) $\Phi$ preserves mixtures;  (consistency) it is consistent, in the sense that $\rho_S=Tr_E(\Phi\rho_S)$;  (positivity) and $\Phi\rho_S$ is positive for all positive $\rho_S$; this implies  the initial state of the system and environment is factorable ($\Phi\rho_S=\rho_S\otimes\rho_E$). 
  To deal with the problem of characterizing reduced dynamics of initial correlated systems,  Pechukas \cite{PP, PP2} suggested to giving up positivity. On the other hand, Aliciki  \cite{RA} argued to  either giving up consistency or linearity.  In the end, the conclusion is that, one way or the other, the domain of validity of the assignment map must be restricted. 
Afterwards,   Stelmachovic et al. \cite{SB}  studied   the influence  of initial correlations between  system and environment in the dynamics of the system, making clear that taking into account such correlations is paramount to the correct description of the evolution.  They showed an instructive example with two qubits (one for the system, one for the environment), evolving under a C-NOT gate: both a maximally entangled
state and a maximally mixed global state have the same one-qubit local maximally mixed states, but the evolution is radically different.
In a comment to \cite{SB}, Salgado et al. \cite{Salgado}  showed for two qubits   that, whatever the initial 
correlations, the system dynamics has the  Kraus representation form,  and is consequently completely 
positive, whenever the global  dynamics is locally unitary. This was then 
proved for bipartite global systems of arbitrary dimension by Hayashi et al. \cite{HKO}.
Later on many authors  worked out    sets of classicaly \cite{RMKSS,SL}   or quantum  \cite{BDMRR,VA} correlated initial global states 
 that guarantee complete positivity of the reduced dynamics.  The subject has recently regained impetus, with many interesting discussions 
  \cite{VA, PRL2016, PRA2013, QIP2016, LAA2015, PRL2015, arxiv2018}.

	In this work we are interested  in the construction of the reduced dynamical map in the case  of  
  systems of $N$ indistinguishable particles, in particular fermions,  which are always correlated,
   and for which an usual tensor product structure between  `system' and `environment' is absent.
   The subtle notion of  quantum correlations of indistinguishable particles has been investigated by many authors, 
    with introduction of seminal ideas, as entanglement of modes \cite{Zanardi}, or entanglement 
    of particles \cite{ESBL,WV,Balachandran2013,iemini13,Iemini13B,Rossignoli1,Rossignoli2}. 
    Our own group has scrutinized the concept of entanglement of particles \cite{iemini13, Iemini13B}, and made interesting applications \cite{Iemini15}.
    More recently,  the concept of `quantumness of correlations' of indistinguishable particles was explored  by 
    Iemini et al. \cite{IDV}, and Debarba et al. \cite{DIV}. 
    It is well established that the  exchange correlations generated by mere antisymmetrization of the state, due to indistinguishability of their fermions,  does not result in 
    entanglement or,
  more generally,   in  quantumness  \cite{IDV,DIV}.  To the best of our knowledge,  the role of initial exchange correlations in the reduced dynamics is still unexplored. 
  We  propose a framework to construct  completely positive maps representing the dynamics of a single particle reduced state.     
  
This paper is organized as follows. In Sec.II we briefly discuss particle correlation in the antisymmetric subspace. In Sec.III we identify a class of initial  
global states that  give rise to completely positive reduced dynamics. In Sec.IV 
we illustrate the formalism with an example of two fermions under
 a quadratic Hamiltonian. Conclusions are presented in Sec.V.

\section{Correlations in the antisymmetric subspace }

Composed distinguishable quantum systems are described by density operators over a composition of Hilbert spaces of individual subsystems, by means of  the tensor product: 
\begin{equation}\label{hspace}
\rho_{1\cdots N} : \M{H}_N \longrightarrow  \M{H}_N,
\end{equation}
where $\M{H}_N= \M{H}_1^{L_1} \otimes\cdots\otimes\M{H}_N^{L_N} $,  $N$ is the number of 
subsystems,  $L_i$ is the dimension of $i$'th subsystem, 
 and  $\rho_{1\cdots N}\in\M{D}(\M{H}_1^{L_1} \otimes\cdots\otimes\M{H}_N^{L_N})$, with $\M{D}$ the set of density operators (positive semidefinite and trace-one operators). 
In these systems, the tensor product structure between the subsystems plays an important role to the characterization of correlations as entanglement \cite{horodeckireview} and quantumness  \cite{zurek01,reviewdiscord}.  
However, the state space  of $N$ indistinguishable fermions is described by the antisymmetrized composed Hilbert space (Fig. 1): 
\begin{equation}
\M{F}_N^L = \M{A}(\M{H}_1^{L} \otimes\cdots\otimes\M{H}_N^{L} ), 
\end{equation}
where $N$ is the number of fermions and $L$ is the number of accessible modes. 
Note that this space does not support a tensor product structure and have a more suitable description in the second quantization formalism.
Therefore a basis in this subspace can be constructed out  of  fermionic operators $\{ a_k \}_{k=1}^L$, satisfying the usual  anti-commutation relations:
\begin{equation}\label{anticom}
\{a_l,\ad{k} \} = \delta_{k,l}, \qquad \{a_k,a_l\}=\{\ad{k},\ad{l}\} =0,
\end{equation}
 where $a_k$ and $\ad{k}$ are annihilation and creation operators for the $k$'th mode, respectively. A single 
 particle orthonormal basis is formed by the set of states $\{ \ad{k}\vacket\}_{k=1}^L$, 
 where $\vacket$ represents the vacuum. 

As mentioned in the Introduction, the correlation of indistinguishable particles, mostly entanglement, was study by many groups \cite{ESBL,WV,Balachandran2013,iemini13,Iemini13B,Rossignoli1,Rossignoli2},  giving rise to many definitions that agree with each other in the fermionic case, in the sense that the set of unentangled states can be written as a convex sum of Slater determinants. More generally, with studies in quantumness \cite{IDV,DIV}, we can define states where the only non-classical correlation present is exchange, which leads to the following definition: 

\begin{definition}A fermionic state $\omega\in\M{D}\left(\M{F}_N^L\right)$ has no quantumness of correlation if it can be decomposed as a convex combination of orthogonal Slater determinants, namely,
\begin{equation}\label{unc.qstate} 
\omega = \sum_{\vec{k}}  p(\vec{k})  \ad{\vec{k}} \vacket\vacbra a_{\vec{k}},
\end{equation}
 where $\vec{k} = (k_1,\ldots, k_N)$ is an $N$-tuple denoting the modes occupied by the fermions,
 with $k_i = 1,...,L$, $p(\vec{k})$ is a probability distributions and
  $\ad{\vec{k}} \vacket \equiv \ad{k_1}\cdots\ad{k_N}\vacket$.
\end{definition}

As we are interested in exploring the role of initial exchange correlations in the reduced dynamics of fermionic systems, 
we will choose the initial global fermionic state in the set with no quantumnes, according to Definition 1.


\section{ Dynamical Maps for Reduced States of Fermionic Systems}
In this section we introduce  the formalism to describe the dynamics of a single fermion in a closed system of $N$ fermions. 
More precisely, given a system of $N$ indistinguishable fermions in the state 
 $\rho(0)$, evolving under the unitary  $U_t$, which  preserves the total number of particles, we want to obtain the dynamical map $\Phi_t$, 
 which evolves   the one-particle reduced state 
$\rho_{r}=Tr_{N-1}\left(\rho(0)\right)$, see Fig. 2.
 Since the fermionic states are restricted to the antisymmetric 
sector of the Hilbert space, it is not possible to start with initial states in the tensor product form.
As discussed in the Introduction,  one way to 
deal with the problem of obtaining completely positive maps, characterizing the dynamics of states initially correlated with an external  system, is to restrict the 
domain of the map. Using the fact that the Kraus representation  assures completely positivity \cite{Breuer,Nielsen}, we will 
show that for some sets of initial states with no quantumness of correlations, we can construct  completely positive  maps for the reduced state.  
\begin{figure}[t]\label{diagram}
\includegraphics[scale=0.28]{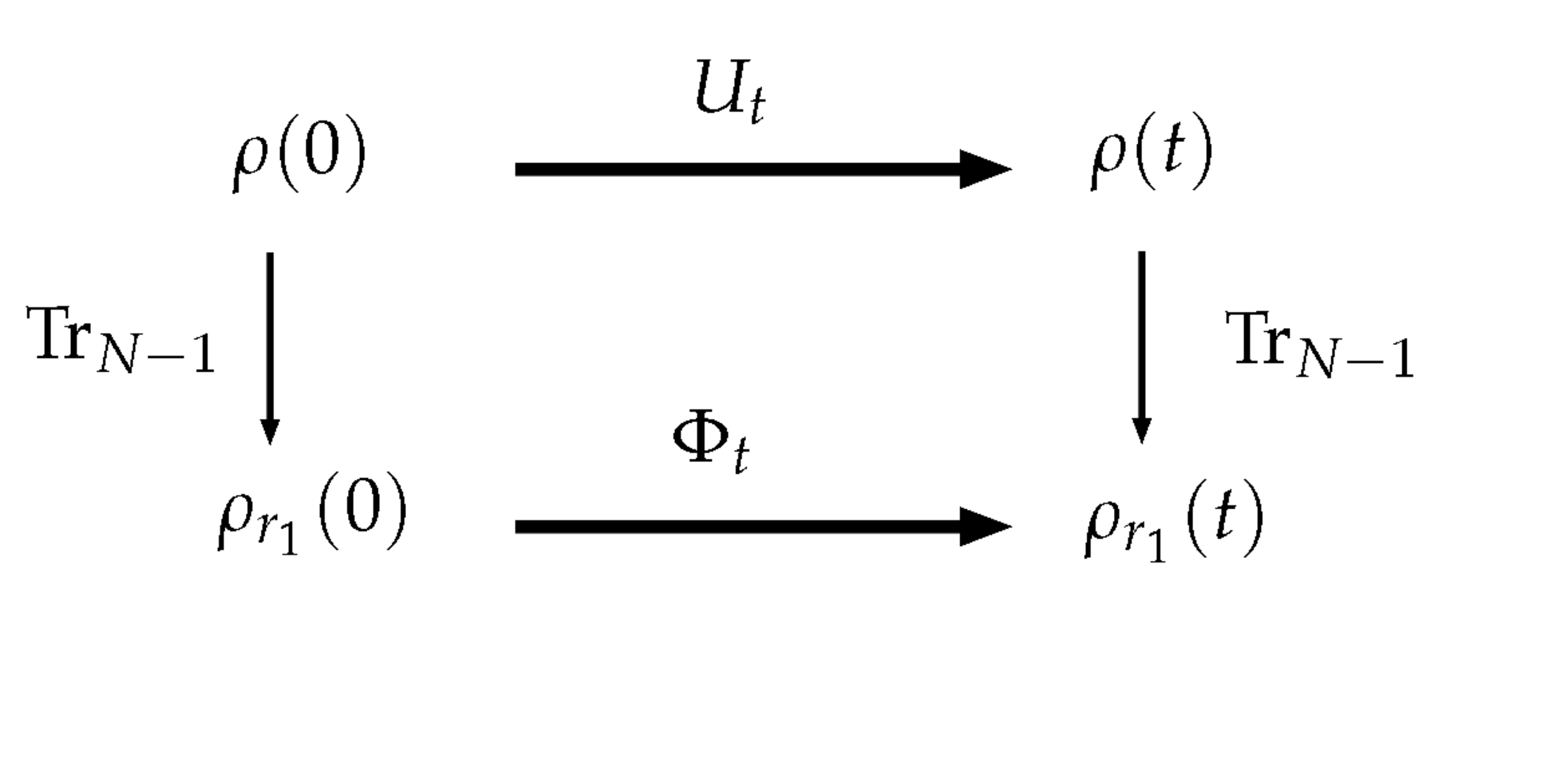}
\caption{Schematic diagram characterizing the dynamics of indistinguishable fermions.  Suppose an initial  $N$-fermion state $\rho(0)$  evolving under the unitary $U_t$. 
The reduced one-fermion state $\rho_{r}(0)=Tr_{N-1}(\rho(0))$  evolves under the dynamical map $\Phi_t$.}
\end{figure}

The construction of the single-fermion dynamical map, in the simplest scenario of a closed system of two fermions  initially in a pure state, $\rho(0) = \ketbra{\psi}{\psi}$,  gives us a good grasp on the general features of the formalism, and includes all the technical problems of the general case. The generalisation to $N$ fermions mixed states is straightforward and performed
in Appendix B.

Let us consider a set of states   in  the antisymmetric space of 2 fermions and $L+1$ modes, that can be written in a given basis
 of  Slater determinants as: 
\begin{eqnarray}\label{rho}
\mathcal{S}_{2}^{\mu,pure} \equiv \left\{ a_{\mu}^{\dagger}a_{k}^{\dagger}\ketbra{0}{0}a_{k}a_{\mu} \right\}_{k=0}^L ,
\end{eqnarray}
where $\mu$ is fixed mode.
Note that $\mu$ labels a reference mode, and different  values of $\mu$ lead to 
distinct sets.

Let us calculate the one-particle reduced state by tracing out  one 
fermion from Eq.(\ref{rho}). Assuming that $\{f_{k}^{\dagger}\}_{k=0}^{L}$
is an orthonormal basis of fermionic creation operators for the space of a single fermion $(\mathcal{F}_{1}^{L+1})$, thus $f_{k}^{\dagger}=\sum_{l}V_{kl}a_{l}^{\dagger}$, $V$ is a unitary matrix of dimension $L+1$.
The partial  trace over  one particle is given by $\rho_{r}=\frac{1}{2}\sum_{k=0}^{L}{f_{k}\rho f_{k}^{\dagger}}$.
  The explicit calculation of   the matrix element $(\rho_{r})_{i,j}$  goes as follows:
   \begin{eqnarray}\label{part.trac}
(\rho_{r})_{i,j} & =\bra{0}f_{j}\left(\frac{1}{2}\sum_{k=0}^{L}{f_{k}\rho f_{k}^{\dagger}}\right)f_{i}^{\dagger}\ket{0}\nonumber\\
                           & =\frac{1}{2}\sum_{k=0}^{L}{\bra{0}f_{k}\left(f_{j}\rho f_{i}^{\dagger}\right)f_{k}^{\dagger}}\ket{0}\nonumber\\
                          & =\frac{1}{2}Tr_1(f_{i}^{\dagger}f_{j}\rho),
\end{eqnarray}
where we used  the  fermionic anti-commutation  relations and the cyclicality of the trace.
 Now we can write the set of single-fermion reduced states of Eq.(\ref{rho}): 
\begin{eqnarray}\label{dom.2pur}
\mathcal{S}_{r(2)}^{\mu,pure} &=& Tr_1\left( \mathcal{S}_2^{\mu,pure} \right) \nonumber\\
&=& \left\{\frac{1}{2} a_{k}^{\dagger}\ketbra{0}{0}a_{k}+\frac{1}{2}a_{\mu}^{\dagger}\ketbra{0}{0}a_{\mu} \right\}_{k=0}^L, \label{rho0}
\end{eqnarray} 
with $\mu$ a fixed mode.
Assuming  the dynamics of  $\rho(0)\in\mathcal{S}_{2}^{\mu,pure}$  is given by the 
unitary  $U_t$, we can define a CP map $\Phi_t^{\mu}$  for the dynamics
 of the single-fermion reduced state 
 $\rho_{r}(0)\in\mathcal{S}_{r(2)}^{\mu,pure}$, \textit{i.e.}, a CP map
  $\Phi_t^\mu:\mathcal{S}_{r(2)}^{\mu,pure}\mapsto\mathcal{F}_{1}^{L+1}$ as follows:

\begin{definition}\label{defk1}

A dynamical map $\Phi_t^{\mu}$  for the single-fermion reduced state $\rho_{r}(0)\in\mathcal{S}_{r(2)}^{\mu,pure}$, of a 2-fermion pure state  initially with no quantumness of correlations, 
 $\rho(0)\in\mathcal{S}_{2}^{\mu,pure}$, evolving under the global unitary $U_t$,
has the  operator sum  representation 
$\Phi_t^{\mu}[\rho_{r}]=\sum_{j=0}^L K_{j}^{\mu}\rho_{r}K_{j}^{\dagger\mu}$, with  the Kraus operators 
\begin{equation}\label{kraus2p}
K_{l}^{\mu}=f_{l}U_ta_{\mu}^{\dagger}.
\end{equation}
\end{definition}

\begin{proof} 
If the  2-fermion state  evolves according to  $\rho(t)= U_{t}\rho(0)U_{t}^{\dagger}$,  the reduced density matrix is:
\begin{eqnarray}\label{rhort}
\rho_{r}(t)&=&Tr_{1}(U_t a_{\mu}^{\dagger}a_{k}^{\dagger}\ketbra{0}{0}a_{k}a_{\mu}U_{t}^{\dagger})\nonumber\\
&=&\sum_{l=0}^{L}{f_{l} U_{t}a_{\mu}^{\dagger}\left(\frac{1}{2}a_{k}^{\dagger}\ketbra{0}{0}a_{k}\right)a_{\mu}U_{t}^{\dagger}f_{l}^{\dagger}},
\end{eqnarray}
where in the last equation we used the definition of fermionic partial trace (Eq.(\ref{part.trac})) and the anti-commutation relations. 
Using the fact that we cannot create more than one fermion in the same mode (Pauli exclusion principle), we can add a second null term  in  Eq.(\ref{rhort}), 
in order to recover the reduced state in the form of Eq.(9),
\begin{eqnarray}
\rho_{r}(t)&=&\sum_{l=0}^{L}{f_{l} U_{t}a_{\mu}^{\dagger}\left(\frac{1}{2}a_{k}^{\dagger}\ketbra{0}{0}a_{k}\right)a_{\mu}U_{t}^{\dagger}f_{l}^{\dagger}}\nonumber\\
&&+\sum_{l=0}^{L}{f_{l} U_{t}a_{\mu}^{\dagger}\left(\frac{1}{2}a_{\mu}^{\dagger}\ketbra{0}{0}a_{\mu}\right)a_{\mu}U_{t}^{\dagger}f_{l}^{\dagger}},
\end{eqnarray}
which can be written as,
\begin{eqnarray}
\rho_{r}(t)&=&\sum_{l=0}^{L}{f_{l} U_{t}a_{\mu}^{\dagger}\left(\rho_{r}(0)\right)a_{\mu}U_{t}^{\dagger}f_{l}^{\dagger}}\nonumber\\
&=&\sum_{l=0}^{L}K_{l}^{\mu}\rho_{r}(0)K_{l}^{\mu\dagger},
\end{eqnarray}
with $  K_{l}^{\mu}=f_{l}U_ta_{\mu}^{\dagger}$.
\end{proof}

 Due to the restriction of the map domain (Eq(\ref{dom.2pur})), the relation between Kraus operators and trace preservation can  be written  as, 
 \begin{equation}
  \sum_{l}{K_{l}^{\mu\dagger}K_{l}^{\mu}}=\rm{diag}\left(\lambda_0,\lambda_{1},...,\lambda_{L}\right),
 \end{equation}
 with $\lambda_{i\neq\mu}=2$ and $\lambda_{\mu}=0$, since
\begin{eqnarray}
Tr\left(\rho_{r}(t)\right)&=&Tr\left[\rm{diag} \left(\lambda_0,\lambda_{1},...,\lambda_{L}\right)\rho_{r}(0)\right]=1,
\end{eqnarray}
where $\rho_{r}(0)\in\mathcal{S}_{r(2)}^{\mu,pure}$. This can be checked by computing the matrix elements of $\sum_{l}{K_{l}^{\mu\dagger}K_{l}^\mu}$, in the basis 
$\{a_{k}^{\dagger}\ket{0}\}_{k=0}^{L}$, namely:
\begin{eqnarray}
\sum_{l=0}^{L}\left({K_{l}^{\mu\dagger}K_{l}^{\mu}}\right)_{i,j}&=&
\bra{0}a_{i} \sum_{l=0}^{L} K_{l}^{\mu\dagger} \left(\sum_{k=0}^{L} a_{k}^{\dagger}\ketbra{0}{0} a_{k} \right)K_{l}^{\mu} a_{j}^{\dagger}\ket{0} \nonumber\\
&=&\bra{0}a_{i}a_{\mu}U_{t}^{\dagger}\left(\sum_{k,l}f_{l}^{\dagger}a_{k}^{\dagger}\ketbra{0}{0}a_{k}f_{l}\right)\nonumber\\
&&\times U_{t}a_{\mu}^{\dagger}a_{j}^{\dagger}\ket{0},\nonumber\\
\end{eqnarray}
where we used in the first line the identity $\sum_{k} a_{k}^{\dagger}\ketbra{0}{0} a_{k} = \mathbb{I}_{\mathcal{F}_{1}^{L+1}}$.

Since $\{a_i\}$ and $\{f_i\}$ are both orthonormal bases,
there exists a unitary $V$, of dimension $L+1$, which performs the  
single particle transformation $f_{l}^{\dagger}=\sum_{m}V_{lm}a_{m}^{\dagger}$, we can simplify the term 
\begin{align}
&\left(\sum_{k,l}f_{l}^{\dagger}a_{k}^{\dagger}\ketbra{0}{0}a_{k}f_{l}\right)=\nonumber\\
&=\left(\sum_{k,l,m,n}V_{m,l}V_{n,l}^{*}a_{m}^{\dagger}a_{k}^{\dagger}\ketbra{0}{0}a_{k}a_{n}\right)\nonumber\\
&=\left(\sum_{k,m}a_{m}^{\dagger}a_{k}^{\dagger}\ketbra{0}{0}a_{k}a_{m}\right)=2\mathbb{I}_{\mathcal{F}_{2}^{L+1}},
\end{align}
therefore,we have:
\begin{eqnarray}
\sum_{l=0}^{L}\left({K_{l}^{\mu\dagger}K_{l}^{\mu}}\right)_{i,j}&=&2\bra{0}a_{i}a_{\mu}a_{\mu}^{\dagger}a_{j}^{\dagger}\ket{0}\nonumber\\\nonumber\\
&=&\left\{\begin{array}{l}
		2, \quad  \text{if } i= j,\,i\neq \mu,\,j\neq \mu\\
		0, \quad \text{otherwise}
\end{array}.
\right.
\end{eqnarray}

As mentioned before, fixing different values of the reference mode $\mu$, generates distinct maps $\Phi_t^{\mu}$  with
domain  $\mathcal{S}_{r(2)}^{\mu,pure}$. 
Now let us compare these  distinct maps.  
We know that given two sets
$\mathcal{S}_2^{\mu,pure}$ and $\mathcal{S}_2^{\nu,pure}$, with fixed modes $\mu$ and $\nu$,  
there exists a unitary $V\in\mathcal{U}(\mathcal{F}_{2}^{L+1})$ such that $a_{\nu}^{\dagger}a_{k}^{\dagger}\ket{0} =V a_{\mu}^{\dagger}a_{k}^{\dagger}\ket{0}$. 
Therefore, any pair of maps $\Phi_t^{\mu}$ and $\Phi_t^{\nu}$ have the Kraus 
operators $\{K_{j}^\mu=f_{j}U_ta_{\mu}^{\dagger}\}_j$ and $\{E_{j}^\nu=f_{j}U_tVa_{\mu}^{\dagger}\}_j$, respectively. We 
can compute an upper bound to the norm difference of 
the (Choi-Jamiolkowski) dynamical matrices $D_{\Phi_t^{\mu}}$ and $D_{\Phi_t^{\nu}}$,  associated with the maps, which 
is proved in Appendix \ref{boundip} :
\begin{align}
&\N{D_{\Phi_t^{\mu}} - D_{\Phi_t^{\nu}}}_1 \leq \nonumber\\
& d^{2}L^{2}\sup_{a_{\vec{k}}^{\dagger}\ketbra{0}{0}a_{\vec{k}^{\prime}}\in \mathcal{F}_{2}^{L+1}}\N{\left(a_{\vec{k}}^{\dagger}\ketbra{0}{0}a_{\vec{k}^{\prime}}-V^{T}a_{\vec{k}}^{\dagger}\ketbra{0}{0}a_{\vec{k}^{\prime}}V^{*}\right)}_1,
\end{align}
where $d$ is the dimension of  $\mathcal{F}_{2}^{L+1}$.
It is illustrative to compare this bound with its counterpart in the case of distinguishable particles, where we have 
initially uncorrelated system $S$ and environment $E$  forming a closed global system,  whose dynamics is described by a unitary $U_{S:E}$. 
 Assuming two dynamical maps, $\Phi_t$ and $\Lambda_t$,  constructed from different initial states of the environment, we have the two sets of  Kraus operators $\{ K_a = \bra{a} U_{S:E}\ket{0} \}_a$ and $\{ E_a = \bra{a} U_{S:E}(\mathbb{I}_{S} \otimes V_E)\ket{0} \}_a$, respectively. Then the following inequality, which is proved in Appendix  \ref{bound}, holds: 
\begin{equation}
\N{D_{\Phi} - D_{\Lambda}}_{1} \leq  d_{S}^2\N{\ketbra{0}{0} - V_E \ketbra{0}{0}V_E^{\dagger} }_1,
\end{equation}
where $d_{S}$ is the dimension of the Hilbert space of the system S.
 It is important to emphasize that the two frameworks are completely different. 
 A tensor product structure between system and environment is absent in our context of indistinguishable fermions. 
 Another remark is that the two maps in the distinguishable particles case have the same domain,  which in general is not true in the case of indistinguishable fermions.

\section{Examples of  One-Particle  Dynamical Maps  of Indistinguishable Fermions}
In this section we illustrate our formalism, deriving  the Kraus operators for  the dynamics of one-fermion reduced state of two distinct two-particle Hamiltonians.
 To simplify the discussion,  we assume initial pure global state, such that  
 the Kraus operators $\{K_{l}^\mu =f_{l}U_ta_{\mu}^{\dagger}\}$ have domain given by  Eq.(\ref{rho0}).

\subsection{Non-interacting  Hamiltonian}

Our first example, consisting of a non-interacting  Hamiltonian, shows the consistency of our formalism. As no correlation can be created, and the initial global state is pure, 
it is expected the one-particle evolution be unitary.  The  Hamiltonian   
can be written  in terms of fermionic operators as   $H=\sum_{i,j}{M_{i,j}a_{i}^{\dagger}a_{j}}$, and has the following diagonal form: $H=\sum_{k}{\lambda_{k}b_{k}^{\dagger}b_{k}}$,   where
\begin{eqnarray}
b_{k}^{\dagger}&=&\sum_{i}V_{k,i}a_{i}^{\dagger},   \\
a_{j}^{\dagger}&=&\sum_{k}V_{k,j}^{*}b_{k}^{\dagger}, 
\end{eqnarray}
$\lambda_k$ are the single particle energy excitations and  $V$ is the unitary that diagonalizes $M$.
 The dynamical evolution is given by the unitary $U_{t}=exp{\left(-it\sum_{k}{\lambda_{k}b_{k}^{\dagger}b_{k}}\right)}$. 
 Now, we form  the Kraus operators using Eq.(\ref{kraus2p}),  with the choice 
 $\{f_{k}^{\dagger}\}_{k=0}^{L}=\{b_{k}\}_{k=0}^L$, namely:  $K_{l}^\mu=b_{l}U_{t}a_{\mu}$. 
The matrix elements of the Kraus operator are explicitly:
\begin{eqnarray}
\left(K_{l}^\mu\right)_{m,n}&=&\bra{0}b_{m}b_{l}U_{t}a_{\mu}^{\dagger}b_{n}^{\dagger}\ket{0}\nonumber\\
&=&\bra{0}b_{m}b_{l}U_{t}\left(\sum_{k}V_{k,\mu}^{*}b_{k}^{\dagger}\right)b_{n}^{\dagger}\ket{0}\nonumber\\
&=&\sum_{k}V_{k,\mu}^{*}e^{-it(\lambda_{k}+\lambda_{n})}\left(\delta_{l,k}\delta_{m,n}-\delta_{m,k}\delta_{l,n}\right),\nonumber\\
\end{eqnarray}
thus
\begin{equation}\label{kraus2}
K_{l}^\mu=\sum_{m}e^{-it(\lambda_{l}+\lambda_{m})}\left(V_{l,\mu}^{*}b_{m}^{\dagger}\ketbra{0}{0}b_{m}-V_{m,\mu}^{*}b_{m}^{\dagger}\ketbra{0}{0}b_{l}\right).\nonumber\\
\end{equation}
The map acts on  its domain (Eq.(\ref{rho0}))  as the unitary $U_{t}$:
\begin{eqnarray}
\rho_{r}(t)&=&\frac{1}{2}\sum_{m,n}\left(V_{m,k}^{*}V_{n,k}+V_{m,\mu}^{*}V_{n,\mu}\right)\nonumber\\
&&\times e^{-it\left(\lambda_{m}-\lambda_{n}\right)}b_{m}^{\dagger}\ketbra{0}{0}b_{n}\nonumber\\
&=&U_{t}\rho_{r}(0){U_{t}}^{\dagger}.
\end{eqnarray}

\subsection{Four Level Interacting System}

Consider two spin-$1/2$ fermions, in a lattice of two sites, whose dynamics is given by the following Hamiltonian:
\begin{equation}\label{EHM}
H=-{\sum_{\sigma=\uparrow\downarrow}{\left(a_{1\sigma}^{\dagger}a_{2\sigma}+h.c\right)}}+u\sum_{j=1}^{2}{n_{j\uparrow}n_{j\downarrow}}+v{n_{1}n_{2}},
\end{equation}
where $a_{j\sigma}^{\dagger}$ and $a_{j\sigma}$ are creation and annihilation operators, respectively, of a fermion 
at site $j$ with spin $\sigma$,
 $n_{j\sigma} = a_{j\sigma}^{\dagger}a_{j\sigma}$ and   $n_j = n_{j\uparrow} +n_{j\downarrow}$ are the number operators. 
The first term of the Hamiltonian characterizes  hopping (tunnelling)  between  sites, while the second and third terms characterize the on-site and intersite interactions, parametrized by $u$ and $v$, respectively. 
In the basis $a_{\vec{k}}^{\dagger}\ket{0}\in\mathcal{F}_{2}^{4}$ , where $\vec{k} = (k_1, k_2)$ has  six possible configurations,
\begin{eqnarray}
\vec{k}&\in&\left\{\left(1\!\uparrow,1\!\downarrow\right),\left(1\!\uparrow,2\!\uparrow\right),\left(1\!\uparrow,2\!\downarrow\right),\left(1\!\downarrow,2\!\uparrow\right),\right.\nonumber\\
&&\left.\left(1\!\downarrow,2\!\uparrow\right),\left(2\!\uparrow,2\!\downarrow\right)\right\},
\end{eqnarray}
we obtain the following matrix representation for the Hamiltonian:
\begin{equation}\label{EHM2}
H = \left(\begin{matrix}
		u & 0 & -1 & 1 & 0 & 0\\
		0 & v & 0 & 0 & 0 & 0 \\
		-1 & 0 & v & 0 & 0 & -1 \\
                1 & 0 & 0 & v & 0 & 1 \\
		0 & v & 0 & 0 & v & 0 \\
                0 & 0 & -1 & 1 & 0 & u \\
\end{matrix}\right).
\end{equation}

Now we form the Kraus operators $K_{j}^\mu=a_{j}U_{t}a_{\mu}^{\dagger}$, with the choice $\{f_{k}^{\dagger}\}_{k=0}^{L}=\{a_{k}\}_{k=0}^L$. If the unitary  $V$ diagonalizes the Hamiltonian,  $D=VHV^{\dagger}$, we can write $U_t$ as:
\begin{equation}
U_{t}=\sum_{\vec{l}}e^{-iD_{\vec{l},\vec{l}}t}\sum_{\vec{k},\vec{k^{\prime}}}V_{\vec{l},\vec{k}}V_{\vec{l},\vec{k^{\prime}}}^{*}a_{\vec{k}}^{\dagger}\ketbra{0}{0}a_{\vec{k^{\prime}}}.
\end{equation}
According to  Eq.\ref{kraus2p} we have:
\begin{eqnarray}\label{kraushb}
K_{j}^\mu&=&a_{j}U_t a^\dagger_{\mu}\nonumber\\
&=&\sum_{\vec{l}}e^{-iD_{\vec{l},\vec{l}}t}\sum_{k_{1},k_{2},k_{1}^{\prime},k_{2}^{\prime}}V_{\vec{l},k_{1}k_{2}}V_{\vec{l},k_{1}^{\prime}k_{2}^{\prime}}^{*}\times\nonumber\\
&&a_{j}a_{k_1}^{\dagger}a_{k_2}^{\dagger}\ketbra{0}{0}a_{k^{\prime}_2}a_{k^{\prime}_1}a_{\mu}^{\dagger}.
\end{eqnarray}
Using the  anti-commutation  relations, the last line of Eq.(\ref{kraushb}) reduces to:
\begin{align}
&a_{j}a_{k_1}^{\dagger}a_{k_2}^{\dagger}\ketbra{0}{0}a_{k^{\prime}_2}a_{k^{\prime}_1}a_{\mu}^{\dagger}=\nonumber\\
&=\left(\delta_{j,k_1}a_{k_2}^{\dagger}-\delta_{j,k_2}a_{k_1}^{\dagger}\right)\ketbra{0}{0}\left(a_{k^{\prime}_2}\delta_{k_{1}^{\prime},\mu}-a_{k^{\prime}_1}\delta_{k_{2}^{\prime},\mu}\right),
\end{align}
and finally,
\begin{eqnarray}
K_{j}^\mu &=&\sum_{\vec{l}}e^{-iD_{\vec{l},\vec{l}}t}\sum_{k,k^{\prime}}\left[V_{\vec{l},jk}\left(V_{\vec{l},k^{\prime}\mu}^{*}-V_{\vec{l},\mu k^{\prime}}^{*}\right)+\right.\nonumber\\
&&\left.V_{\vec{l},kj}\left(V_{\vec{l},kj}^{*}-V_{\vec{l},\mu k^{\prime}}^{*}\right)\right]a_{k}^{\dagger}\ketbra{0}{0}a_{k^{\prime}}.
\end{eqnarray}
The unitary V can now be written explicitly as,
\begin{equation}
V = \left(\begin{matrix}
		-\frac{1}{\sqrt{2}} & 0 & 0 & 0 & 0 & \frac{1}{\sqrt{2}}\\
		0 & 0 & 0 & 0 & 1 & 0 \\
		0 & 0 & \frac{1}{\sqrt{2}} & \frac{1}{\sqrt{2}} & 0 & 0 \\
                0 & 1 & 0 & 0 & 0 & 0\\
		a(u,v) & 0 & b(u,v) & -b(u,v) & 0 & a(u,v) \\
               b(u,v) & 0 & -a(u,v) & a(u,v) & 0 & b(u,v) \\
\end{matrix}\right),
\end{equation}
while the explicit form of $D$ is:
\begin{eqnarray}
D&=& \rm{diag} \left(u,v,v,v,\frac{1}{2}\left[(u+v)-\sqrt{{\Delta(u,v)}^2+16}\right],\right.\nonumber\\
&&\left.\frac{1}{2}\left[(u+v)+\sqrt{{\Delta(u,v)}^2+16}\right]\right),
\end{eqnarray}
with $\Delta(u,v)=v-u$,
\begin{equation}
a(u,v)=\frac{\Delta(u,v)+\sqrt{{\Delta(u,v)}^2+16}}{\sqrt{2\left[\left(\Delta(u,v)+\sqrt{{\Delta(u,v)}^2+16}\right)^2+16\right]}},\nonumber
\end{equation}
and
\begin{equation}
b(u,v)=\frac{4}{\sqrt{2\left[\left(\Delta(u,v)+\sqrt{{\Delta(u,v)}^2+16}\right)^2+16\right]}}\nonumber.
\end{equation}

\section{Conclusion}

In systems of indistinguishable fermions, antisymmetrization  eliminates the notion of separability, and the very concept of correlation, 
which is an important ingredient in obtaining CP maps for open systems, becomes subtle.
We showed that it is possible to write a CP map for a single fermion, which is part of a system on $N$ indistinguishable particles, for sets of initial global states with no quantumness of correlation.
 We also illustrated our formalism with  examples of  CP maps  corresponding to a non-interacting and an interacting Hamiltonian of two fermions. 
 The extension of our formalism to subsystems with more than one indistinguishable particle, and for the case of bosons presents no difficulty. 
 As many properties of many-body Hamiltonians can be inferred from the single particle reduced state, an interesting investigation would be if 
 any computational gain can be obtained by the employment of the formalism  developed in this article.
 
\acknowledgments 
We acknowledge financial support by the Brazilian agencies INCT-IQ (National Institute of Science and Technology for Quantum Information), FAPEMIG, and CNPq. 

\appendix
\renewcommand{\thesubsection}{\Alph{section}.\arabic{subsection}}

\section{ Dynamical Map for Single-Fermion Reduced State - General Case with Initial Mixed States }\label{sfm}
\subsection{System of  Two Fermions }

Consider a set of mixed quantum states in the antisymmetric space of $L+1$ modes and two fermions, $\rho(0)\in \mathcal{F}_{2}^{L+1}$, written in a basis of Slater determinants:
\begin{align}\label{rho2}
&\mathcal{S}_{2}^{p}=\nonumber\\
&\left\{\rho(0)=\sum_{\mu\in\Sigma,k\in\Gamma}p(\mu)q(k)a_{\mu}^{\dagger}a_{k}^{\dagger}\ketbra{0}{0}a_{\mu}a_{k}\bigm| p\,\,\text{fixed}\right\},
\end{align}
with  both  $\Sigma$ and $\Gamma$  finite, and disjoint, $\Sigma\cap\Gamma=\emptyset$. Let  $\left|\Sigma\right|=d$, $\left|\Gamma\right|=L-d$,   and  $\mathbb{Z}_{L+1}=\{0,1,\ldots,L\}$. 
We took  the $d$ elements of $\Sigma$ from $\mathbb{Z}_{L+1}$,  and the set  $\Gamma$ as $\mathbb{Z}_d\setminus\Sigma$.  
Tracing out one fermion from $\mathcal{S}_{2}^{p}$, we obtain the  single-fermion reduced states, $\{\rho_{r}(0)\}$: 
\begin{align}\label{rho02}
&\mathcal{S}_{r(2)}^{p}=\nonumber\\
&\left\{\rho_{r}(0)=\frac{1}{2}\sum_{k\in\Gamma}{q(k)a_{k}^{\dagger}\ketbra{0}{0}a_{k}}+\frac{1}{2}\sum_{\mu\in\Sigma}p(\mu)a_{\mu}^{\dagger}\ketbra{0}{0}a_{\mu}\bigm|\right.\nonumber\\
&\left.\quad p\,\,\text{fixed}\right\}.
\end{align}

\begin{definition}\label{def2}
A CP map $\Phi_t^{p}$,  describing the dynamics of the single particle reduced state $\rho_{r}(0)\in\mathcal{S}_{r(2)}^{p}$, 
can be written in Kraus representation as:
\begin{equation}
\Phi_t^{p}[\mathord{ \rho_{r}(0)   }]=\sum_{j=0}^{L}\sum_{\mu\in\Sigma}K_{j,\mu}^{p}\mathord{\rho_{r}(0)}K_{j,\mu}^{p\dagger},
\end{equation}
with the Kraus operators: 
\begin{equation}
K_{l,\mu}^p=f_{l}U_{t}{a_\mu}^{\dagger}\sqrt{p(\mu)}\prod_{m\in\Sigma}\left(1-a_{m}^{\dagger}a_{m}\right),
\end{equation}
\end{definition}

\begin{proof}
 The one-particle reduced dynamics can be expressed as $\rho_{r}(t)= Tr_{1}(U_t\rho(0)U_{t}^{\dagger})$:
\begin{align}
&\rho_{r}(t)=\nonumber\\
&=\frac{1}{2}\sum_{k=0}^{L}{f_{l} U_{t}\left(\sum_{\mu\in\Sigma,k\in\Gamma}p(\mu)q(k)a_{\mu}^{\dagger}a_{k}^{\dagger}\ketbra{0}{0}a_{\mu}a_{k}\right)U_{t}^{\dagger}f_{l}^{\dagger}}\nonumber\\\nonumber\\
&=\sum_{l=0}^{L}\sum_{\mu\in\Sigma}\sqrt{p(\mu)}f_{l} U_{t}a_{\mu}^{\dagger}\left(\frac{1}{2}\sum_{k\in\Gamma}{q(k)a_{k}^{\dagger}\ketbra{0}{0}a_{k}}\right)\times\nonumber\\
&\sqrt{p(\mu)}a_{\mu}U_{t}^{\dagger}f_{l}^{\dagger}.
\end{align}
Defining an operator $\prod_{m\in\Sigma}\left(1-a_{m}^{\dagger}a_{m}\right)$ that annihilates  fermions in  $\Sigma$, and leaves states unchanged otherwise, 
we can write
\begin{align}\label{r1t}
&\rho_{r}(t)=\nonumber\\
&=\sum_{l=0}^{L}\sum_{\mu\in\Sigma}\sqrt{p(\mu)}f_{l} U_{t}a_{\mu}^{\dagger}\prod_{m\in\Sigma}\left(1-a_{m}^{\dagger}a_{m}\right)\times\nonumber\\
&\left(\frac{1}{2}\sum_{k\in\Gamma}{q(k)a_{k}^{\dagger}\ketbra{0}{0}a_{k}}\right)\prod_{m\in\Sigma}\left(1-a_{m}^{\dagger}a_{m}\right)a_{\mu}U_{t}^{\dagger}f_{l}^{\dagger}\sqrt{p(\mu)}.\nonumber\\
\end{align}
Note that  
\begin{equation}\label{r0t}
\prod_{m\in\Sigma}\left(1-a_{m}^{\dagger}a_{m}\right)\left(\frac{1}{2}\sum_{j\in\Sigma}{p(j)a_{j}^{\dagger}\ketbra{0}{0}a_{j}}\right)=0.
\end{equation}
Adding Eq.(\ref{r0t}) to Eq.(\ref{r1t}), Definition \ref{def2} is proven:
\begin{eqnarray}
\rho_{r}(t)&=&\sum_{l=0}^{L}\sum_{\mu\in\Sigma}f_{l} U_{t}a_{\mu}^{\dagger}\sqrt{p(\mu)}\prod_{m\in\Sigma}\left(1-a_{m}^{\dagger}a_{m}\right)\nonumber\\
&&\times\left(\frac{1}{2}\sum_{k\in\Gamma}{q(k)a_{k}^{\dagger}\ketbra{0}{0}a_{k}}+\frac{1}{2}\sum_{j\in\Sigma}p(j)a_{j}^{\dagger}\ketbra{0}{0}a_{j}\right)\nonumber\\\nonumber\\
&&\times\prod_{m\in\Sigma}\left(1-a_{m}^{\dagger}a_{m}\right)\sqrt{p(\mu)}a_{\mu}U_{t}^{\dagger}f_{l}^{\dagger}\nonumber\\
&=&\sum_{l=0}^{L}\sum_{\mu\in\Sigma}K_{l,\mu}^p\rho_{r}(0)K_{l,\mu}^{\dagger p}.
\end{eqnarray}
\end{proof}

\subsection{System of $N$-Fermions}

Consider  a set of states  $\rho\in\M{F}_N^{L+1}$, with no quantumness,
\begin{align}\label{rhoN}
&\mathcal{S}_{N}^{p}=\nonumber\\
&\left\{\rho(0) = \sum_{\vec{\mu}\in\vec{\Sigma}} \sum_{k\in\Gamma}p (\mu_1,\cdots,\mu_{N-1})q(k)\right.\nonumber\\
&\left.\quad \times a_{\vec{\mu}}a_{k} \ketbra{0}{0} a_{k}a_{\vec{\mu}}\bigm| p\,\,\text{fixed}\right\},  
\end{align}
where $\vec{\mu} = (\mu_1,\ldots, \mu_{N-1})$,  $\vec{\Sigma} = (\Sigma_1,\ldots, \Sigma_{N-1})$ are $N-1$-tuples, and
 $p(\vec{\mu})$, $q(k)$ are probability distributions.
 The sets $\Sigma_{j}$ and $\Gamma$ are finite, and disjoint $\Sigma_{j}\cap\Gamma=\emptyset$ $\forall {j}$.
With   $\left|\vec{\Sigma}=\cup_{i=1}^{N-1}\Sigma_{i}\right|=d$, $\left|\Gamma\right|=L-d$, and $\mathbb{Z}_{L+1}=\{0,1,\ldots,L\}$,  we took the  $d$ elements of $\cup_{i=1}^{N-1}\Sigma_{i}$
from $\mathbb{Z}_{L+1}$,  and the set $\Gamma$ as $\mathbb{Z}_d\setminus\cup_{i=1}^{N-1}\Sigma_{i}$. 
 Note that $d$ is the number of accessible modes for $N-1$ fermions, thus $d\geq N-1$. 

Tracing  $N-1$ fermions out   from  (\ref{rhoN}), we obtain the set of single-fermion reduced states $\{\rho_{r}(0)\}$:   
\begin{align}
&\mathcal{S}_{r(N)}^p=\left\{\rho_{r}(0)=\frac{1}{N}\sum_{k\in\Gamma}{q(k)a_{k}^{\dagger}\ketbra{0}{0}a_{k}}+\right.\nonumber\\
&\left.\quad\frac{1}{N}\sum_{j=1}^{N-1}\sum_{\mu_{j}\in\Sigma_{j}}p_j(\mu_j)a_{\mu_{j}}^{\dagger}\ketbra{0}{0}a_{\mu_{j}}\bigm|\right.\nonumber\\
&\left.\quad 	{p}_{j}\,\,\text{fixed}\,\forall{j}\right\},
\end{align}
where $p_j(\mu_j)=\sum_{\vec{\mu}\setminus\mu_{j}}p (\mu_1,\cdots,\mu_{N-1})$ is the marginal distribution.

\begin{definition}\label{defN}
A CP map $\Phi_t^{p}$  describing  the dynamics of the  single particle reduced state $\rho_{r}(0)\in\mathcal{S}_{r(N)}^{p}$,
 can be written in Kraus representation as:
\begin{equation}
\Phi_t^p[\mathord{  \rho_{r}(0) }]=\sum_{\vec{l},\vec{\mu}}^{L}K_{\vec{l},\vec{\mu}}^{p}  \rho_{r}(0) K_{\vec{l},\vec{\mu}}^{p\dagger },
\end{equation}
with  the Kraus operators:  
\begin{align}
& K_{\vec{l},\vec{\mu}}^{p}=\nonumber\\
&=\sqrt{p (\mu_1,\cdots,\mu_{N-1})}\times\nonumber\\
& \quad f_{\vec{l}}Ua_{\vec{\mu}}^{\dagger}\prod_{m\in\cup_{i=1}^{N-1} \Sigma_{i}}\left(1-a_{m}^{\dagger}a_{m}\right).
\end{align}
\end{definition}

The proof of Definition \ref{defN} is \textit{mutatis mutandis} the same performed for Definition \ref{def2}.

\section{Norm Bound } 

\subsection{ Fermionic System}
\label{boundip}
\begin{theorem}
Consider two maps $\Phi$ and $\Lambda$, 
 with Kraus operators $K_{j}=f_{j}Ua_{\mu}$ and $E_{j}=f_{j}UVa_{\mu}$, respectively.
Then the following inequality holds: 
\begin{align}
&\N{D_{\Phi} - D_{\Lambda}}_1 \leq \nonumber\\
& d^{2}L^{2}\sup_{a_{\vec{k}}^{\dagger}\ketbra{0}{0}a_{\vec{k}^{\prime}}\in \mathcal{F}_{2}^{L+1}}\N{\left(a_{\vec{k}}^{\dagger}\ketbra{0}{0}a_{\vec{k}^{\prime}}-V^{T}a_{\vec{k}}^{\dagger}\ketbra{0}{0}a_{\vec{k}^{\prime}}V^{*}\right)}_1,
\end{align}
where $d$ is the dimension of  $\mathcal{F}_{2}^{L+1}$, $\vec{k}=(k_1,k_2)$ is a $2$-tuple indicating the modes occupied by a pair of fermions,
with $k_i=0,\cdots,L$, and $V$ is a unitary operator, $V:\mathcal{F}_{2}^{L+1}\mapsto\mathcal{F}_{2}^{L+1}$.
\end{theorem}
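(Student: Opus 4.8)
The plan is to exploit that the Choi--Jamiolkowski assignment $\Phi\mapsto D_\Phi$ is \emph{linear}, so that $D_\Phi-D_\Lambda=D_{\Phi-\Lambda}$, and then to push the entire discrepancy between the two maps into the unitary $V$ that distinguishes their Kraus operators. Writing the dynamical matrix as $D_\Phi=\sum_{m,n}\Phi(\ketbra{m}{n})\otimes\ketbra{m}{n}$, with $\ket{m}=\ad{m}\ket{0}$ the single-particle basis, linearity gives
\begin{equation}
D_\Phi-D_\Lambda=\sum_{m,n}\big(\Phi-\Lambda\big)(\ketbra{m}{n})\otimes\ketbra{m}{n}.
\end{equation}
Inserting the Kraus operators $K_j=f_jU\ad{\mu}$ and $E_j=f_jUV\ad{\mu}$ (consistent with $K_l^\mu=f_lU_t\ad{\mu}$ in the body) and using $\ad{\mu}\ad{m}\ket{0}=\ad{\vec{k}}\ket{0}$ with $\vec{k}=(\mu,m)$ and $\vec{k}'=(\mu,n)$, each block collapses onto the two-fermion operators of the statement,
\begin{equation}
(\Phi-\Lambda)(\ketbra{m}{n})=\sum_{j} f_j\,U\Big(\ad{\vec{k}}\ketbra{0}{0}a_{\vec{k}'}-V\ad{\vec{k}}\ketbra{0}{0}a_{\vec{k}'}V^{\dagger}\Big)U^{\dagger} f_j^{\dagger}.
\end{equation}

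Next I would estimate the trace norm block by block. The triangle inequality together with $\N{\ketbra{m}{n}}_1=1$ gives $\N{D_\Phi-D_\Lambda}_1\le\sum_{m,n}\N{(\Phi-\Lambda)(\ketbra{m}{n})}_1$. For each block, conjugation by the unitary $U$ leaves the trace norm invariant, while the operation $X\mapsto\sum_j f_jXf_j^{\dagger}$ is precisely (twice) the fermionic partial trace of Eq.(\ref{part.trac}); since $\sum_j f_j^{\dagger}f_j$ is the total number operator ($=2$ on $\M{F}_2^{L+1}$), this map is trace-norm contractive up to the factor $2$. Hence each block obeys
\begin{equation}
\N{(\Phi-\Lambda)(\ketbra{m}{n})}_1\le 2\,\N{\ad{\vec{k}}\ketbra{0}{0}a_{\vec{k}'}-V\ad{\vec{k}}\ketbra{0}{0}a_{\vec{k}'}V^{\dagger}}_1.
\end{equation}

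Finally I would bound every block by the supremum in the statement and collect the combinatorial prefactor. The operator inside is invariant in trace norm under global transposition, which turns $V\,(\cdot)\,V^{\dagger}$ into $V^{T}(\cdot)V^{*}$ and matches the form written in the theorem; bounding each block by $\sup\N{\ad{\vec{k}}\ketbra{0}{0}a_{\vec{k}'}-V^{T}\ad{\vec{k}}\ketbra{0}{0}a_{\vec{k}'}V^{*}}_1$ and counting the number of contributing index pairs --- at most of order $(\dim\M{F}_2^{L+1})^2$ times the squared single-particle mode count --- produces the prefactor $d^2L^2$.

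The main obstacle I expect is bookkeeping rather than analysis: reconciling the Choi convention so that the reference-system transpose converts $V\,(\cdot)\,V^{\dagger}$ into $V^{T}(\cdot)V^{*}$ (using $\N{X^{T}}_1=\N{X}_1$), and carrying out the term counting carefully enough to land on the stated constant $d^2L^2$, which is deliberately generous. A secondary point to treat cleanly is the factor of $2$ from the fermionic partial trace, and the fact that $\Phi$ and $\Lambda$ are defined on \emph{different} domains, so the estimate must be read as a bound on the dynamical matrices themselves rather than on their action restricted to a common domain.
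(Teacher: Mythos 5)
Your argument is correct in substance but follows a genuinely different route from the paper's. The paper vectorizes the Kraus operators, writes $D_{\Phi}=\sum_j \text{vec}(K_j)\text{vec}(K_j)^{\dagger}$, uses $\text{vec}(ABC)=(A\otimes C^{T})\text{vec}(B)$, expands the global unitary $U$ in the Slater basis with coefficients $u_{\vec{k},\vec{k}'}$, and then grinds through a chain of norm inequalities (triangle, multiplicativity over tensor products, submultiplicativity, monotonicity under partial trace, and the evaluation $\N{a_{\mu}}_1=L$), finally using unitarity of $U$ to collapse the quadruple sum --- which is exactly where the factor $d^{2}L^{2}$ comes from. You instead work block by block in the Choi decomposition, never expand $U$ (unitary invariance of the trace norm suffices), and invoke contractivity of the normalized fermionic partial trace $X\mapsto\tfrac{1}{2}\sum_j f_j X f_j^{\dagger}$, which is CPTP on the two-particle sector. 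This is cleaner and in fact yields the sharper prefactor $2(L+1)^{2}$ (indeed $2L^{2}$, since blocks with $m=\mu$ or $n=\mu$ vanish by Pauli exclusion), which implies the stated bound for $L\ge 2$; for $L=1$ both sides of the claimed inequality vanish identically, so the theorem still follows, but you should say this explicitly rather than appeal to a vague count "of order $d^{2}L^{2}$" --- your argument does not actually produce that constant, it produces a better one. The only other point to tighten is the transpose bookkeeping: transposition sends $VYV^{\dagger}$ to $V^{*}Y^{T}V^{T}$, not $V^{T}Y^{T}V^{*}$; to land on the form in the statement use that $V^{T}V^{*}=\mathbb{I}$, so conjugating by the unitary $V^{T}$ gives $\N{Z-V^{*}ZV^{T}}_1=\N{Z-V^{T}ZV^{*}}_1$, and that the Slater basis operators are closed under transposition, so the supremum is unaffected.
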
 
\begin{proof}
Writing the dynamical matrix of a map $\Phi$ in terms of the Kraus operators $\{K_{j}\}$:
\begin{equation}
D_{\Phi} = \sum_{j}\text{vec}\!\left(K_{j}\right)\text{vec}\!\left(K_{j}\right)^{\dagger}, 
\end{equation}
where the {vec} operation is defined by $\text{vec}\!\left(\ketbra{x}{y}\right)= \ket{x}\otimes\ket{y}$,  we obtain:

\begin{align}
&\N{D_{\Phi} - D_{\Lambda}}_1 =\nonumber\\
&= \N{ \sum_{j}\left(\text{vec}\!\left(K_{j}\right)\text{vec}\!\left(K_{j}\right)^{\dagger}-\text{vec}\!\left(E_{j}\right)\text{vec}\!\left(E_{j}\right)^{\dagger}\right)}_1\nonumber\\
&= \N{ \sum_{j}\left(\text{vec}\!\left(f_{j}Ua_{\mu}\right)\text{vec}\!\left(f_{j}Ua_{\mu}\right)^{\dagger}-\right.\nonumber\\
&\quad\left.\text{vec}\!\left(a_{j}UVa_{\mu}\right)\text{vec}\!\left(f_{j}UVa_{\mu}\right)^{\dagger}\right)}_1.
\end{align} 

Using the following identity for matrices:
\begin{equation}
\text{vec}\!\left(ABC\right)=\left(A\otimes C^{T}\right) \text{vec}\!\left(B\right),
\end{equation}
we have,
\begin{align}\label{norm1}
&\N{D_{\Phi} - D_{\Lambda}}_1 =\nonumber\\
&= \N{ \sum_{j}\left(f_{j}\otimes a_{\mu}^{*}\text{vec}\!\left(U\right)\text{vec}\!\left(U\right)^{\dagger}f_{j}^{\dagger}\otimes a_{\mu}^{T}-\right.\nonumber\\
&\quad\left. f_{j}\otimes a_{\mu}^{*}V^{T}\text{vec}\!\left(U\right)\text{vec}\!\left(U\right)^{\dagger}f_{j}^{\dagger}\otimes V^{*}a_{\mu}^{T}\right)}_1.
\end{align} 

With the unitary operator $U$ written as,
\begin{equation}
U=\sum_{\vec{k},\vec{k^{\prime}}}u_{\vec{k},\vec{k^{\prime}}}a_{\vec{k}}^{\dagger}\ketbra{0}{0}a_{\vec{k}^{\prime}}
\end{equation}
where $\vec{k} = (k_1, k_2)$,   Eq.\eqref{norm1} becomes:
\begin{align}
&\N{D_{\Phi} - D_{\Lambda}}_1 =\nonumber\\
&= \N{ \sum_{j}\sum_{\vec{k},\vec{k^{\prime}}\vec{l},\vec{l^{\prime}}}u_{\vec{k},\vec{k^{\prime}}}u_{\vec{l},\vec{l^{\prime}}}^{*}\left[\left(f_{j}a_{\vec{k}}^{\dagger}\ket{0}\otimes a_{\mu}^{*}a_{\vec{k}^{\prime}}^{\dagger}\ket{0}\right)\times\right.\nonumber\\
&\left.\left(\bra{0}a_{\vec{l}}f_{j}^{\dagger}\otimes \bra{0}a_{\vec{l}^{\prime}}a_{\mu}^{T}\right)-\left(f_{j}a_{\vec{k}}^{\dagger}\ket{0}\otimes a_{\mu}^{*}V^{T}a_{\vec{k}^{\prime}}^{\dagger}\ket{0}\right)\times\right.\nonumber\\
&\left.\left(\bra{0}a_{\vec{l}}f_{j}^{\dagger}\otimes \bra{0}a_{\vec{l}^{\prime}}V^{*}a_{\mu}^{T}\right)\right]}_1\nonumber\\
&= \N{\sum_{\vec{k},\vec{k^{\prime}}\vec{l},\vec{l^{\prime}}}u_{\vec{k},\vec{k^{\prime}}}u_{\vec{l},\vec{l^{\prime}}}^{*}\left[\sum_{j}\left(f_{j}a_{\vec{k}}^{\dagger}\ketbra{0}{0}a_{\vec{l}}f_{j}^{\dagger}\right)\otimes\right.\nonumber\\
&\left.\left(a_{\mu}^{*}a_{\vec{k}^{\prime}}^{\dagger}\ketbra{0}{0}a_{\vec{l}^{\prime}}a_{\mu}^{T}-a_{\mu}^{*}V^{T}a_{\vec{k}^{\prime}}^{\dagger}\ketbra{0}{0}a_{\vec{l}^{\prime}}V^{*}a_{\mu}^{T}\right)\right]}_1.
\end{align} 
Using some norm properties, as triangle inequality $(\N{X+Y}\leq\N{X}+\N{Y})$, positive scalability $(\N{\alpha X}=|\alpha|\N{X},\,\,\alpha\in\mathbb{C})$,  and tensor product ($\N{X_{1}\otimes X_{2}}= \N{X_{1}}\N{X_{2}}$) and the definition of  fermionic partial trace of one particle, we can write:

\begin{align}
&\N{D_{\Phi} - D_{\Lambda}}_1 \leq \nonumber\\
& \sum_{\vec{k},\vec{k^{\prime}}\vec{l},\vec{l^{\prime}}}\left|u_{\vec{k},\vec{k^{\prime}}}u_{\vec{l},\vec{l^{\prime}}}^{*}\right|\N{Tr_1\left(a_{\vec{k}}^{\dagger}\ketbra{0}{0}a_{\vec{l}}\right)}_{1}\times\nonumber\\
&\N{a_{\mu}^{*}\left(a_{\vec{k}^{\prime}}^{\dagger}\ketbra{0}{0}a_{\vec{l}^{\prime}}-V^{T}a_{\vec{k}^{\prime}}^{\dagger}\ketbra{0}{0}a_{\vec{l}^{\prime}}V^{*}\right)a_{\mu}^{T}}_1.
\end{align}
As  the trace norm is non-increasing under partial trace $(\N{Tr_{X_{2}}\left(X\right)}_{1} \leq \N{A}_{1})$,  is sub-multiplicative ($\N{XY}_{1}\leq\N{X}_{1}\N{Y}_{1}$), and we also have
 $\N{X}_{1}=\N{X^{\dagger}}_{1}=\N{X^{T}}_{1}=\N{X^{*}}_{1}$:
\begin{align}
&\N{D_{\Phi} - D_{\Lambda}}_1 \leq \\
& \sum_{\vec{k},\vec{k^{\prime}}\vec{l},\vec{l^{\prime}}}\left|u_{\vec{k},\vec{k^{\prime}}}u_{\vec{l},\vec{l^{\prime}}}^{*}\right|\N{a_{\vec{k}}^{\dagger}\ketbra{0}{0}a_{\vec{l}}}_{1}\times\nonumber\\
&\N{\left(a_{\vec{k}^{\prime}}^{\dagger}\ketbra{0}{0}a_{\vec{l}^{\prime}}-V^{T}a_{\vec{k}^{\prime}}^{\dagger}\ketbra{0}{0}a_{\vec{l}^{\prime}}V^{*}\right)}_1\N{a_{\mu}}_{1}^{2}.
\end{align}
As  $\N{a_{\vec{k}}^{\dagger}\ketbra{0}{0}a_{\vec{l}}}_{1}=Tr\sqrt{a_{\vec{l}}^{\dagger}\ketbra{0}{0}a_{\vec{l}}}=1$, and  $\N{a_{\mu}}_{1}=Tr\sqrt{n_{\mu}}=L$ is the number of
 states $\{a_{\vec{k}}^{\dagger}\ket{0}\}$ with occupied mode $\mu$:
\begin{align}
&\N{D_{\Phi} - D_{\Lambda}}_1 \leq \nonumber\\
&L^2 \sum_{\vec{k},\vec{k^{\prime}}\vec{l},\vec{l^{\prime}}}\sqrt{u_{\vec{k},\vec{k^{\prime}}}u_{\vec{l},\vec{l^{\prime}}}^{*}u_{\vec{k},\vec{k^{\prime}}}^{*}u_{\vec{l},\vec{l^{\prime}}}}\times\nonumber\\
&\N{\left(a_{\vec{k}^{\prime}}^{\dagger}\ketbra{0}{0}a_{\vec{l}^{\prime}}-V^{T}a_{\vec{k}^{\prime}}^{\dagger}\ketbra{0}{0}a_{\vec{l}^{\prime}}V^{*}\right)}_1.
\end{align}
From the definition of unitary operators we have, $\sum_{k}u_{i,k}^{*}u_{j,k}=\sum_{k}u_{k,i}^{*}u_{k,j}=\delta_{i,j}$, therefore:
\begin{align}
&\N{D_{\Phi} - D_{\Lambda}}_1 \leq \nonumber\\
&L^2 \sum_{\vec{k^{\prime}}\vec{l^{\prime}}}\N{\left(a_{\vec{k}^{\prime}}^{\dagger}\ketbra{0}{0}a_{\vec{l}^{\prime}}-V^{T}a_{\vec{k}^{\prime}}^{\dagger}\ketbra{0}{0}a_{\vec{l}^{\prime}}V^{*}\right)}_1.
\end{align}
Finally,
\begin{align}
&\N{D_{\Phi} - D_{\Lambda}}_1 \leq \nonumber\\
& d^{2}L^{2}\sup_{a_{\vec{k}}^{\dagger}\ketbra{0}{0}a_{\vec{k}^{\prime}}\in \mathcal{F}_{2}^{L+1}}\N{\left(a_{\vec{k}}^{\dagger}\ketbra{0}{0}a_{\vec{k}^{\prime}}-V^{T}a_{\vec{k}}^{\dagger}\ketbra{0}{0}a_{\vec{k}^{\prime}}V^{*}\right)}_1.
\end{align}
\end{proof}

\subsection{ System of Distinguishable Particles}\label{bound}

\begin{theorem}
Assume two maps $\Phi$ and $\Lambda$, with Kraus operators $\{ K_a = \bra{a} U_{S:E}\ket{0} \}_a$ and  $\{ E_a = \bra{a} U_{S:E}(\mathbb{I}_{S} \otimes V_E)\ket{0} \}_a$, respectively. Then the following inequality holds: 
\begin{equation}
\N{D_{\Phi} - D_{\Lambda}}_{1} \leq  d_{S}^2\N{\ketbra{0}{0} - V_E \ketbra{0}{0}V_E^{\dagger} }_1,
\end{equation}
where $d_{S}$ is the dimension of the Hilbert space of the system $S$.
\end{theorem}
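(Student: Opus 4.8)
The plan is to follow the same route as the companion fermionic Theorem, working with the Choi--Jamiolkowski (dynamical) matrices in the vectorized representation. First I would write $D_{\Phi}=\sum_{a}\text{vec}(K_{a})\text{vec}(K_{a})^{\dagger}$ and the analogous expression for $D_{\Lambda}$, substitute $K_{a}=\bra{a}U_{S:E}\ket{0}$ and $E_{a}=\bra{a}U_{S:E}(\mathbb{I}_{S}\otimes V_{E})\ket{0}$, and subtract. The guiding observation is that the two Kraus families share the common left factor $\bra{a}U_{S:E}$ and differ only in whether the environment is initialized in $\ket{0}$ or in $V_{E}\ket{0}$; hence the whole difference should collapse onto the environment density-matrix difference $\Delta_{E}:=\ketbra{0}{0}-V_{E}\ketbra{0}{0}V_{E}^{\dagger}$, which is precisely the operator appearing on the right-hand side.

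Next I would use the matrix identity $\text{vec}(ABC)=(A\otimes C^{T})\text{vec}(B)$ already exploited in the fermionic proof to pull the common unitary $U_{S:E}$ out of each vectorized Kraus operator, together with the elementary expansion $\text{vec}(\mathbb{I}_{S}\otimes\ket{0})=\sum_{\alpha}\ket{\alpha}_{S}\ket{0}_{E}\ket{\alpha}_{S'}$ (and its $V_{E}\ket{0}$ counterpart). Because the summation over the environment measurement label $a$ realizes a partial trace over the environment output, the difference of dynamical matrices takes the form
\[
D_{\Phi}-D_{\Lambda}=\text{Tr}_{E}\!\left[(U_{S:E}\otimes\mathbb{I}_{S'})\,\Theta\,(U_{S:E}^{\dagger}\otimes\mathbb{I}_{S'})\right],\qquad \Theta=\sum_{\alpha,\alpha'}\ketbra{\alpha}{\alpha'}_{S}\otimes\Delta_{E}\otimes\ketbra{\alpha}{\alpha'}_{S'},
\]
so that the only environment-dependent ingredient of $\Theta$ is exactly $\Delta_{E}$.

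To finish, I would invoke the same norm toolbox used in the fermionic theorem: the triangle inequality, applied term-by-term to the double sum over the $d_{S}^{2}$ pairs $(\alpha,\alpha')$; invariance of the trace norm under the unitary conjugation by $U_{S:E}\otimes\mathbb{I}_{S'}$; the non-increase of the trace norm under partial trace; multiplicativity under tensor products, $\N{X_{1}\otimes X_{2}}_{1}=\N{X_{1}}_{1}\N{X_{2}}_{1}$; and $\N{\ketbra{\alpha}{\alpha'}}_{1}=1$. Each of the $d_{S}^{2}$ terms then contributes one copy of $\N{\Delta_{E}}_{1}$, which collects into the stated constant and yields $\N{D_{\Phi}-D_{\Lambda}}_{1}\le d_{S}^{2}\N{\ketbra{0}{0}-V_{E}\ketbra{0}{0}V_{E}^{\dagger}}_{1}$.

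The main obstacle is precisely the tensor structure of the environment. Unlike the fermionic case, where the auxiliary operators $f_{j}$ and $a_{\mu}$ act on a single fixed one-particle space and factor out immediately, here the measurement $\bra{a}$ and the initial states $\ket{0}$ and $V_{E}\ket{0}$ live on the environment factor, which is coupled to the system through $U_{S:E}$. One must therefore track the $\M{H}_{S}\otimes\M{H}_{E}$ bookkeeping carefully, so that the sum over $a$ becomes a genuine partial trace and the $V_{E}$-dependence localizes entirely in the environment-input slot. I also note that the constant $d_{S}^{2}$ is a consequence of the term-by-term triangle inequality over the $d_{S}^{2}$ matrix units; keeping the sum coherent (a H\"older-type estimate) would sharpen it to $d_{S}$, so the stated bound holds comfortably.
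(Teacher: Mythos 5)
Your proposal is correct and follows essentially the same route as the paper's proof: the difference of dynamical matrices localizes on the environment-input operator $\ketbra{0}{0}-V_E\ketbra{0}{0}V_E^{\dagger}$, the sum over $a$ realizes the partial trace over $E$, and the triangle inequality over the $d_S^2$ matrix units together with unitary invariance and tensor multiplicativity of the trace norm yields the bound. Your closing remark that keeping the sum over $(\alpha,\alpha')$ coherent (it assembles into $d_S$ times a maximally entangled projector) sharpens the constant to $d_S$ is a valid observation, but it does not change the fact that the argument matches the paper's.
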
 

\begin{proof}
Writing the dynamical matrix of a map $\Phi$ in the Choi representation:
\begin{equation}
D_{\Phi} = \sum_{i,j=1}^{d_{S}} \Phi(\ketbra{i}{j}) \otimes \ketbra{i}{j}, 
\end{equation}
we obtain: 
\begin{align}
&\N{D_{\Phi} - D_{\Lambda}}_1 =\nonumber\\
&= \frac{1}{d_{S}^2} \N{\sum_{i,j=1}^{d_{S}} \Phi(\ketbra{i}{j}) \otimes \ketbra{i}{j} - \sum_{i,j=1}^{d_{S}} \Lambda(\ketbra{i}{j}) \otimes \ketbra{i}{j}}_1\nonumber\\
&= \N{\sum_{i,j=1}^{d_{S}} \left\{ \sum_{a} K_a \ketbra{i}{j} K_{a}^{\dagger} - E_a \ketbra{i}{j} E_{a}^{\dagger}  \right\}\otimes \ketbra{i}{j} }_1 \nonumber\\
&\leq\sum_{i,j=1}^{d_{S}} \N{ \left\{ \sum_{a} K_a \ketbra{i}{j} K_{a}^{\dagger} - E_a \ketbra{i}{j} E_{a}^{\dagger}  \right\}\otimes \ketbra{i}{j} }_1. \label{eq40}
\end{align} 
Thus, by the definition of Kraus operators above: 
\begin{align}
&K_a \ketbra{i}{j} K_{a}^{\dagger} - E_a \ketbra{i}{j} E_{a}^{\dagger} = \nonumber\\
&=\bra{a}_E \left\{ U_{SE}\ketbra{i}{j}_S\otimes (\ketbra{0}{0}_E - V\ketbra{0}{0}_E V^{\dagger}) U_{SE}^{\dagger } \right\} \ket{a}_E,
\end{align}
substituting in Eq.\eqref{eq40}, and using  $\N{X\otimes Y} = \N{X}\N{Y}$:
\begin{align}
&\N{D_{\Phi} - D_{\Lambda}}_1 \leq \nonumber\\
&d_{S}^2\N{ \left\{ \sum_{a} \bra{a} \left[ U_{SE}\ketbra{i}{j}\otimes (\ketbra{0}{0} - V\ketbra{0}{0}V^{\dagger}) U_{SE}^{\dagger } \right] \ket{a} \right\} }_1 \nonumber\\
& = d_{S}^2\N{ \left\{  \left[ U_{SE}\ketbra{i}{j}\otimes (\ketbra{0}{0} - V\ketbra{0}{0}V^{\dagger}) U_{SE}^{\dagger } \right] \right\} }_1,
\end{align}
where we used  that $\sum_a \bra{a} X \ket{a} = \text{Tr}_{E}(X)$. Finally, as trace distance is invariant under unitary operations, the statement is proved:
 \begin{equation}
 \N{D_{\Phi} - D_{\Lambda}}_1 \leq d_{S}^2\N{  
 (\ketbra{0}{0} - V\ketbra{0}{0}V^{\dagger})}_1.
 \end{equation}
\end{proof}


\begin{thebibliography}{50}
\bibitem{Breuer} H.-P. Breuer, and F. Petruccione. The Theory of Open Quantum Systems (Oxford University Press, Oxford, 2002).

\bibitem{Nielsen} M. A. Nielsen, and I. L. Chuang, Quantum Computation and Quantum Information (Cambridge University Press, New York, 2000).

\bibitem{SB} P. Stelmachovic, and V. Buzek. Phys. Rev. A {\bf 64}, 062106 (2001).

\bibitem{HKO} H. Hayashi, G. Kimura, and Y. Ota. Phys. Rev. A {\bf 67}, 062109 (2003).

\bibitem{RMKSS} C. A. Rodríguez-Rosario, K. Modi, A. Kuah, A. Shaji, and E. C. G. Sudarshan. J. Phys. A: Math. Gen. {\bf 41}, 205301 (2008).

\bibitem{SL} A. Shabani, and D. A. Lidar, Phys. Rev. Lett. {\bf 102}, 100402 (2009).

\bibitem{BDMRR} A. Brodutch, A. Datta, K. Modi, A. Rivas, and C. A. Rodr\'iıguez-Rosario, Phys. Rev. A,  {\bf 87}, 042301 (2013).

\bibitem{VA} B. Vacchini, and  G. Amato, Sci. Rep. {\bf 6}, 37328 (2016).





\bibitem{PP} P. Pechukas, Phys. Rev. Lett. {\bf 73}, 1060 (1994).

\bibitem{RA} R. Alicki, Phys. Rev. Lett. {\bf 75}, 3020 (1995).

\bibitem{PP2} P. Pechukas, Phys. Rev. Lett. {\bf 75}, 3021 (1995).

\bibitem{Salgado}  D. Salgado and J.L. Sanchez-Gomez, https://arxiv.org/abs/quant-ph/0211164.

\bibitem{PRL2016} A. Shabani, D.A. Lidar, Phys. Rev. Lett. {\bf 116}, 04990 (2016)

\bibitem{PRA2013} A. Brodutch, A. Datta, K. Modi, A. Rivas, C. Rodr\'{\i}guez-Rosario, Phys. Rev. A {\bf 87}, 042301 (2013).

\bibitem{QIP2016} J.M. Dominy, A. Shabani, D.A. Lidar, Quantum Inf. Process. {\bf 15},  465-494 (2016).

\bibitem{LAA2015} J. Hou, C. Li, Y. Poon, X. Qi, N. Sze, Lin. Alg. App. {\bf 470}, 51-59 (2015).

\bibitem{PRL2015} M. Ringbauer, C. J. Wood, K. Modi, A. Gilchrist, A. G. White, and A. Fedrizzi, Phys. Rev. Lett. {\bf 114}, 090402 (2015).

\bibitem{arxiv2018} D. Schmid, K.  Reid,  and R.W. Spekkens, arxiv:1806.02381v1 (2018).





\bibitem{Zanardi} P. Zanardi, Physical Review A {\bf 65}, 042101 (2002).

\bibitem{ESBL} K. Eckert, J. Schliemann, D. Bruss and M. Lewenstein, Ann. Phys. {\bf 299}, 88-127 (2002).

\bibitem{WV} H. M. Wiseman and John A. Vaccaro, Phys. Rev. Lett.91, 097902 (2003).

\bibitem{Balachandran2013} A. P. Balachandran, T. R. Govindarajan, A. R. de Queiroz and
A. F. Reyes-Lega, 
Phys. Rev. Lett. \textbf{110}, 080503 (2013).

\bibitem{iemini13}
\bibinfo{eqnarray}{ {F.}~ {Iemini}}  {and}
  \bibinfo{author}{ {R.~O.}  {Vianna}},
  \bibinfo{journal}{Phys. Rev. A} \textbf{\bibinfo{volume}{87}},
  \bibinfo{pages}{022327} (\bibinfo{year}{2013}),

\bibitem{Iemini13B} F. Iemini, T.O. Maciel, T. Debarba, and R.O. Vianna, Quantum Inf. Process {\bf 12}, 733-746 (2013).


\bibitem{Rossignoli1} N. Gigena, and R. Rossignoli, Phys. Rev. A {\bf 92}, 042326 (2015);  Phys. Rev. A {\bf 94}, 042315 (2016);
Phys. Rev A {\bf 95}, 062320 (2017).

\bibitem{Rossignoli2} M. Di Tulio, N. Gigena, R. Rossignoli, Phys. Rev. A {\bf 97}, 062109 (2018).


\bibitem{Iemini15} F. Iemini, T.O. Maciel, and R.O. Vianna, Phys. Rev. B {\bf 92}, 075423 (2015).


\bibitem{IDV}  F. Iemini, T. Debarba, and R. O. Vianna, Phys. Rev. A {\bf 89}, 032324 (2014),

\bibitem{DIV} T. Debarba, F. Iemini,  and R. O. Vianna, Phys. Rev. A {\bf 95}, 022325 (2017),




\bibitem{wildebook}
\bibinfo{author}{M. Wilde},
 \bibinfo{title}{{Quantum Information Theory}}
  (\bibinfo{publisher}{Cambridge University Press}, \bibinfo{year}{2013}).

\bibitem{vedral01}
\bibinfo{author}{L. Henderson {and} V. {Vedral}},
  \bibinfo{journal}{Journal of Physics A: Mathematical and General}
  \textbf{\bibinfo{volume}{34}}, \bibinfo{pages}{6899} (\bibinfo{year}{2001}).
 

\bibitem{piani08}
{{M.}{Piani}}, {P.}{Horodecki}, {and} {{R.}{Horodecki}},
  \bibinfo{journal}{Phys. Rev. Lett.} \textbf{\bibinfo{volume}{100}},
  \bibinfo{pages}{090502} (\bibinfo{year}{2008}).

\bibitem{zurek01}
{H. Ollivier} {and}  W. {Zurek},
  \bibinfo{journal}{Phys. Rev. Lett.} \textbf{\bibinfo{volume}{88}},
  \bibinfo{pages}{017901} (\bibinfo{year}{2001}).

\bibitem{reviewdiscord}
\bibinfo{author}{{K.} {Modi}},
  \bibinfo{author}{{A.} {Brodutch}},
  \bibinfo{author}{{H.}~ {Cable}},
  \bibinfo{author}{ {T.}~ {Paterek}},  {and}
  \bibinfo{author}{ {V.}~ {Vedral}},
  \bibinfo{journal}{Rev. Mod. Phys.} \textbf{\bibinfo{volume}{84}},
  \bibinfo{pages}{1655} (\bibinfo{year}{2012}),

\bibitem{horodeckireview}
\bibinfo{author}{ {R.}~ {Horodecki}},
  \bibinfo{author}{ {P.}~ {Horodecki}},
  \bibinfo{author}{ {M.}~ {Horodecki}},
   {and}
  \bibinfo{author}{ {K.}~ {Horodecki}},
  \bibinfo{journal}{Reviews of Modern Physics} \textbf{\bibinfo{volume}{81}},
  \bibinfo{pages}{865} (\bibinfo{year}{2009}).

\bibitem{plastino09}
\bibinfo{author}{ {A.}~ {Plastino}},
  \bibinfo{author}{ {D.}~ {Manzano}},  {and}
  \bibinfo{author}{ {J.}~ {Dehesa}},
  \bibinfo{journal}{EPL (Europhysics Letters)} \textbf{\bibinfo{volume}{86}},
  \bibinfo{pages}{20005} (\bibinfo{year}{2009}).
  
\end{thebibliography}

\end{document}